\documentclass[doublecol,british]{epl2} 
\usepackage[normalem]{ulem}

\usepackage{mathtools}
\usepackage{verbatim}
\usepackage{refstyle}
\usepackage{amsmath}
\usepackage{amsthm}
\usepackage{amssymb}
\usepackage{geometry}
\geometry{verbose,tmargin=3cm,bmargin=2cm,lmargin=2cm,rmargin=2cm,headheight=2cm,headsep=2cm,footskip=1cm}
\makeatletter

\AtBeginDocument{}
\AtBeginDocument{}
\AtBeginDocument{}
\AtBeginDocument{\providecommand\eqref[1]{\ref{eq:#1}}}
\AtBeginDocument{}
\AtBeginDocument{}
\AtBeginDocument{}
\AtBeginDocument{\providecommand\defnref[1]{\ref{defn:#1}}}
\AtBeginDocument{\providecommand\tblref[1]{\ref{tbl:#1}}}

\newref{tbl}{name=Table~}{}
\newref{defn}{name=Definition~}{}
\newref{sec}{name=Section~}
\newref{ssec}{name=Subsection~}
\newref{fig}{name=Figure~}

\RS@ifundefined{subref}
  {\def\RSsubtxt{section~}\newref{sub}{name = \RSsubtxt}}
  {}
\RS@ifundefined{thmref}
  {\def\RSthmtxt{Theorem~}\newref{thm}{name = \RSthmtxt}}
  {}
\RS@ifundefined{lemref}
  {\def\RSlemtxt{Lemma~}\newref{lem}{name = \RSlemtxt}}
  {}
\RS@ifundefined{propref}
  {\def\RSproptxt{Proposition~}\newref{prop}{name = \RSproptxt}}
  {}

\theoremstyle{plain}

\theoremstyle{plain}
\newtheorem*{thm*}{\protect\theoremname}

  \theoremstyle{definition}
  \newtheorem{defn}{\protect\definitionname}

  \theoremstyle{remark}
  \newtheorem*{rem}{\protect\remarkname}

  \theoremstyle{remark}
  \newtheorem{defnrem}{\protect\remarkname}[defn]

  \theoremstyle{remark}

  \theoremstyle{plain}
  \newtheorem*{prop*}{\protect\propositionname}
  \theoremstyle{plain}
  
  \theoremstyle{plain}
  
  \theoremstyle{definition}
  
  \theoremstyle{definition}
  \newtheorem*{notn*}{\protect\notationname}

\usepackage{paralist}
\usepackage{graphicx}
\graphicspath{ {images/} }

\usepackage{xcolor}
\usepackage{hyperref}
\hypersetup{
    colorlinks,
    linkcolor={red!50!black},
    citecolor={blue!50!black},
    urlcolor={blue!80!black}
}

\makeatother

\usepackage{babel}
  \providecommand{\definitionname}{Definition}
  \providecommand{\examplename}{Example}
  \providecommand{\lemmaname}{Lemma}
  \providecommand{\propositionname}{Proposition}
  \providecommand{\remarkname}{Remark}
  \providecommand{\theoremname}{Theorem}
  \providecommand{\notationname}{Notation}
\setlength{\itemindent}{0pt}



\title{Revisiting the admissibility of non-contextual hidden variable models in quantum mechanics}
\shorttitle{Admissibility of non-contextual HV models in QM} 

\author{Atul Singh Arora\inst{1,2} \and Kishor Bharti\inst{3} \and Arvind\inst{1}}
\shortauthor{A. S. Arora \etal}

\institute{                    
  \inst{1} Indian Institute of Science Education and Research (IISER), Mohali,\\Sector 81 SAS Nagar 140306 Punjab India.\\
  \inst{2} Centre for Quantum Information and Communication, Ecole polytechnique de Bruxelles, CP 165,\\Universit\'e libre de Bruxelles (ULB), 1050 Brussels, Belgium \\
  \inst{3} Centre for Quantum Technologies (CQT), \\National University of Singapore (NUS) -  Block S15, 3 Science Drive 2, Singapore 117543.
}

\pacs{03.65.Ta}{Foundations of quantum mechanics}

\abstract{ %
We construct a non-contextual
hidden variable model consistent with all { the}
kinematic predictions of quantum mechanics (QM). The famous
Bell-KS theorem shows that non-contextual models which
satisfy a further reasonable restriction are inconsistent
with QM. { In our construction,} we define a
weaker variant of this restriction which captures its
essence while still allowing a non-contextual description of
QM. This is in contrast to { the} contextual
 hidden variable {toy} models, such as
the one by Bell, { and}  brings out an interesting
alternate way of looking at QM. { The results}
also relate to { the Bohmian}
model, where it is harder to pin
down such features.}
\begin{document}

\maketitle


\section{Introduction}
Quantum mechanics (QM) has been one of the most
successful theories in physics so far, however,
there has not yet been a final word on its
completeness and
interpretation~\cite{BellSpkblUnspkbl}.
Einstein's~\cite{EinsteinEPR} work on the
incompleteness of QM and the subsequent seminal
work of Bell~\cite{BellSpkblUnspkbl}, assessing
the compatibility of a more complete model
involving hidden variables (HV) and locality with
QM, has provided deep insights into  how the
quantum world differs from its classical
counterpart.  In recent times, these insights have
been of pragmatic utility in the area of quantum
information processing (QIP), where EPR pairs are
fundamental motifs of
entanglement~\cite{Ekert,PironioRndmnssCrtfcn,NielsenChuang}.
The work of  Kochen Specker (KS)~\cite{KochenSpecker}
and Gleason {\it
et.\,al.}~\cite{Gleason,Peres,Mermin} broadened
the schism between  HV models and QM.  They showed
that it was contextuality and not non-locality
which was at the heart of this schism and the
incompatibility between HV models and QM can arise
even for a single indivisible quantum system.  Contextuality
has thus been identified as a fundamental
non-classical feature of the quantum world and
experiments have also been proposed and conducted
to this effect~\cite{SimonContExpProp,
HuangContExp}.
Contextuality, on the one hand has
led to investigations on the foundational aspects
of QM including attempts to prove the completeness of
QM~\cite{PawelCntxClsscl,CabelloMmryQM}, and on
the other hand has been harnessed for computation
and
cryptography~\cite{HowardCntxCmptn,CabelloCntxScrt}.
{ While there have been generalisations, in this
letter, we restrict ourselves to the standard notion of
non-contextuality as used by KS~\cite{KochenSpecker}.} 

Not all HV models (e.g.
Bohm's model based on
trajectories~\cite{Bohm1,Bohm2}), however, are incompatible
with QM~\cite{BellOnHiddenVariables} and we, in
this letter, present a new non-contextual HV model
consistent with QM.

\begin{figure}[h]
\begin{center}
\includegraphics[width=0.98\columnwidth]{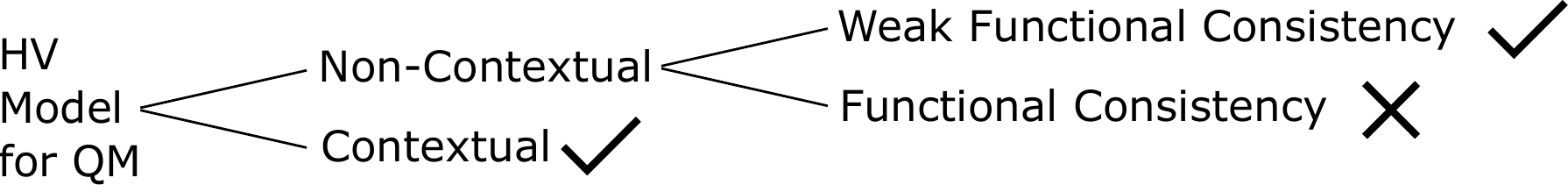}
\end{center}
\caption{Non-contextuality is not inconsistent
with kinematic predictions of QM} \label{fig:block}\end{figure}
{
The KS theorem is applicable to non-contextual HV models
which additionally satisfy \emph{functional-consistency ---
algebraic constraints obeyed by commuting quantum
observables must be satisfied by the HV model at the level
of individual outcomes}. 
One of the main justifications for imposing
this was the requirement that
{ the} observable $\hat A^2$ must depend on $\hat
A$ even at the HV level. This sounds reasonable because
otherwise one can construct HV models where these
observables { can} get mapped to random variables
with no relation to each other (see~\cite{KochenSpecker}). 
As we
will show, one can weaken this requirement into what we call
\emph{weak functional-consistency -- functional-consistency is
demanded only when the observables in question have sharp
values}. This entails that even in the HV model the
observables will depend on each other where they should for
{ consistency} and not otherwise,
thereby capturing the essential idea { without undue
constraints}. The consequence of
demanding only the weaker variant is that the KS theorem no
longer applies and non-contextual HV models can be
consistently constructed. This is in stark contrast with
contextual HV models, such as the toy
model\footnote{introduced in connection with Gleason's
Theorem} by Bell~\cite{BellOnHiddenVariables} because
here we consider models where the algebraic constaints of 
QM are not obeyed in general by the HV model at the level of
individual outcomes. We demonstrate this contrast
by re-examining a `proof of contextuality' using
our model.}
This provides a new
way of looking at the classical-quantum divide and at
the foundations of quantum mechanics.

\section{Non-Contextuality and Functional Consistency}
We introduce some notation and make the relevant notions
precise to facilitate the construction of our
model.
\begin{notn*}
(a) $\psi\in\mathcal{H}$ represents a pure quantum
mechanical state of the system in the Hilbert
space $\mathcal{H}$, (b) $\hat{\mathcal{H}}$ is
defined to mean the set of Hermitian observables
for the system, (c) $[\mathcal{H}]$ is defined to
mean $\{\mathcal{H},\,\mathbb{R}^{\otimes}\}$, which
represents the state  space of the
system including HVs, (d) $[\psi]\in[\mathcal{H}]$
will represent the state of the system including
HVs, (e) a prediction map is $M:\{
\hat{\mathcal{H}},[\mathcal{H}] \}\to\mathbb{R}$,
(f) a sequence map is $S:\{
\hat{\mathcal{H}},[\mathcal{H}],\mathbb{R}
\}\to[\mathcal{H}]$, (g) $f$ is an arbitrary map
from $\{
\hat{\mathcal{H}},\hat{\mathcal{H}},\dots\hat{\mathcal{H}}
\} \to \hat{\mathcal{H}}$ constructed using
multiplication and addition of compatible
observables, and multiplication with complex
numbers, (h) $\tilde{f}$ is a map constructed by
replacing observables in $f$ with real numbers.
\end{notn*}
\begin{defn} A theory is non-contextual, if it
provides a map $M: \{
\hat{\mathcal{H}},[\mathcal{H}] \} \to\mathbb{R}$
to explain measurement outcomes. A theory which is
not non-contextual is contextual~\cite{peresBook}.
\end{defn}

\begin{defnrem}
A prediction map of the form {$M: \{
\hat{\mathcal{H}},[\mathcal{H}] \} \to\mathbb{R}$}
itself can be called  non-contextual.
\end{defnrem}

\begin{defnrem}
{ Broader definitions in the literature have
been suggested~\cite{spekkens_pra} which extend the notion
to probabilistic HV models. 

The idea is that any feature of
a HV model that is not determined solely by the operational
aspect of QM is defined to be a demonstration of contextuality. 
If, for instance, this distinction arises in the measurement procedure
then it is termed as measurement contextuality.
To maintain a
distinction between different features of HV models, which
is of interest here, we stick to the standard definition.}

\end{defnrem}

{ 
In addition to a HV model being non-contextual KS~\cite{KochenSpecker} demand \emph{functional-consistency} to establish their no-go theorem which is defined below in our notation.
}


\begin{defn}
A prediction map $M$ is \emph{functionally-consistent} iff
\begin{align*} &
M(f(\hat{B}_{1},\hat{B}_{2},\dots\hat{B}_{N}),[\psi])
= \\ & \tilde
f(M(\hat{B}_{1},[\psi]),M(\hat{B}_{2},[\psi]),\dots
M(\hat{B}_{N},[\psi])), \end{align*}
where $\hat{B}_{i}\in\hat{\mathcal{H}}$ are
arbitrary mutually commuting observables and
$[\psi]\in[\mathcal{H}]$. A
\emph{non functionally-consistent} map is one that is not
functionally-consistent.  
\end{defn} 
Note that if $M$ is taken to represent the
measurement outcome (in QM), then for states of
the system which are simultaneous eigenkets of
$\hat{B}_{i}$s, $M$ must clearly be
\emph{functionally-consistent}. It is, however, not obvious that
this property must always hold. For example,
consider two spin-half particles in the state 
$\left|1\right\rangle \otimes\left|1\right\rangle
$ written in the computational basis and 
operators 
$\hat{B}_{1}=\hat{\sigma}_{x}\otimes\hat{\sigma}_{x}$,
$\hat{B}_{2}=\hat{\sigma}_{y}\otimes\hat{\sigma}_{y}$
and
$\hat{C}=\hat{B}_{1}\hat{B}_{2}=-\hat{\sigma}_{z}\otimes\hat{\sigma}_{z}$
written in terms of Pauli operators.
We must have $M(\hat{C})=-1$ while
$M(\hat{B}_{1})=\pm1$ and $M(\hat{B}_{2})=\pm1$
independently, according to QM, with probability
half. Here \emph{functional-consistency} clearly is not
required to hold { and indeed this is why it was demanded in addition to being consistent with QM by KS}.
Antithetically, it is clear that if one first
measures $\hat{B}_{1}$ and subsequently measures
$\hat{B}_{2}$, then the product of the results
must be $-1$. This is consistent with measuring
$\hat{C}$. { In our treatment, instead of imposing \emph{functional-consistency}, we demand its aforesaid weaker variant. It captures the same idea, however, only when it has a precise meaning according to QM.} { To this end we define}
\emph{weak functional-consistency} as follows.
\begin{defn} 
A prediction map $m$ has
\emph{weak functional-consistency} for a given
sequence map $s$, iff \begin{align*}
&M(f(\hat{B}_{1},\hat{B}_{2},\dots\hat{B}_{N}),[\psi_{1}])=\\
&\tilde
f(M(\hat{B}_{1},[\psi_{k_{1}}]),M(\hat{B}_{2},[\psi_{k_{2}}]),\dots,M(\hat{B}_{N},[\psi_{k_{N}}])),
\end{align*} 
where
$\{k_{1},k_{2},\dots,k_{N}\} \in\{
N! {\rm ~permutations~of~}  k'{\rm s} \}$, 
$\hat{B}_{i}\in\hat{\mathcal{H}}$ are
arbitrary mutually commuting observables,
$[\psi_{i}]\in[\mathcal{H}]$ and
$[\psi_{k+1}] :=
S(\hat{B}_{k},[\psi_{k}],M(\hat{B}_{k},[\psi_{k}]))$,
$\forall\,[\psi_{i}]$.  
\label{defn:seqnMltpl}
\end{defn} 

With these definitions we are now ready to discuss
the `proof of contextuality'. We first state the
contextuality theorem in our notation:
\begin{thm*} Let a map
$M:\hat{\mathcal{H}}\to\mathbb{R}$, be s.t. (a)
$M(\hat{\mathbb{I}})=1$, (b)
$M(f(\hat{B}_{1},\hat{B}_{2},\dots))=\tilde
f(M(\hat{B}_{1}),M(\hat{B}_{2}),\dots)$, for any
arbitrary function $f$, where $\hat{B}_{i}$ are
mutually commuting Hermitian operators. If $m$ is
assumed to describe the outcomes of measurements,
then no $M$ exists which is consistent with all
predictions of QM. 
\label{thm:KS}
\end{thm*}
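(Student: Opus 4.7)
The plan is to deduce from (a) and (b) that $M$, restricted to projectors, is a two-valued assignment that respects orthogonal decompositions of the identity, and then to invoke a known geometric obstruction to the existence of such an assignment in dimension $\geq 3$.

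First I would observe that any orthogonal projector $\hat{P}$ satisfies $\hat{P}^{2}=\hat{P}$. Choosing $f(x)=x^{2}$ in (b), functional consistency yields $M(\hat{P})^{2}=M(\hat{P})$, hence $M(\hat{P})\in\{0,1\}$; this is consistent with $M$ describing outcomes since the spectrum of a projector is $\{0,1\}$. Next, for any orthonormal basis $\{|e_{i}\rangle\}_{i=1}^{d}$ of $\mathcal{H}$ the rank-one projectors $\hat{P}_{i}=|e_{i}\rangle\langle e_{i}|$ are mutually commuting and satisfy $\sum_{i}\hat{P}_{i}=\hat{\mathbb{I}}$. Applying (b) with $f(x_{1},\dots,x_{d})=\sum_{i}x_{i}$ together with (a) gives $\sum_{i}M(\hat{P}_{i})=1$, so exactly one of the $M(\hat{P}_{i})$ equals $1$. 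Hence $M$ induces a $\{0,1\}$-colouring of the rays of $\mathcal{H}$ in which every orthonormal basis contains precisely one ray coloured $1$.

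The final step is to derive a contradiction from the existence of such a colouring when $\dim\mathcal{H}\geq 3$. The standard route is the explicit Kochen–Specker construction: one exhibits a finite set of rays in $\mathbb{R}^{3}$ whose orthogonality graph admits no such $\{0,1\}$-colouring, by tracing a chain of forced assignments that eventually conflicts at some ray. Alternatively one may invoke Gleason's theorem, according to which every frame function on $\mathcal{H}$ with $\dim\mathcal{H}\geq 3$ arises from a density operator and is therefore continuous on the projective space; a $\{0,1\}$-valued frame function is manifestly discontinuous, contradicting Gleason.

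The bookkeeping reducing the problem to a colouring question is routine; the real content, and the main obstacle, is the non-colourability statement itself, whose proof requires either Kochen and Specker's explicit (and ingenious) finite ray set or the rather intricate frame-function analysis underlying Gleason's theorem. Either suffices to conclude that no $M$ satisfying (a), (b) and consistency with QM can exist, noting as a final remark that the two-dimensional case is excluded precisely because it admits non-trivial $\{0,1\}$-valued frame functions, which is why the theorem's force is felt only from $\dim\mathcal{H}\geq 3$ onwards.
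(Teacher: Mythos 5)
Your argument is correct in outline but takes a genuinely different route from the paper. The paper proves the theorem via the Peres--Mermin square of two-qubit Pauli observables: products along rows and columns force an even and an odd number of $-1$ assignments in the same $3\times3$ array, a two-line parity contradiction that is entirely self-contained and uses the \emph{multiplicative} part of condition (b). You instead use the \emph{additive} part: from $\hat{P}^{2}=\hat{P}$ and $\sum_{i}\hat{P}_{i}=\hat{\mathbb{I}}$ you correctly reduce the problem to a $\{0,1\}$-colouring of rays in which every orthonormal basis receives exactly one $1$, and then appeal to the Kochen--Specker ray construction or to Gleason's theorem for the non-colourability. Both reductions are sound, and yours has the advantage of working already in dimension $3$ (the paper's Peres--Mermin argument needs $\dim\mathcal{H}\geq4$) and of ruling out a broader class of maps --- indeed the paper's own remark following its proof notes exactly this sum-rule strengthening and attributes it to Kochen and Specker. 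What your route gives up is self-containedness: the entire combinatorial difficulty is deferred to the cited non-colourability result, which you acknowledge but do not prove, whereas the paper's parity argument is complete on the page. If you intend the proof to stand alone rather than cite the $117$-ray set or Gleason, you would need to supply that construction; as a proof by reduction to a standard known theorem, however, your argument is acceptable and complete.
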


\begin{proof} Peres Mermin [PM]
($\left|\mathcal{H}\right|\ge4$)
\cite{Peres,Mermin}:
For a system composed of two spin-half particles 
consider the following set of operators 
\begin{equation*}
\hat{A}_{ij}\doteq\left[\begin{array}{ccc}
\hat{\mathbb{I}}\otimes\hat{\sigma}_{x} &
\hat{\sigma}_{x}\otimes\hat{\mathbb{I}} &
\hat{\sigma}_{x}\otimes\hat{\sigma}_{x}\\
\hat{\sigma}_{y}\otimes\hat{\mathbb{I}} &
\hat{\mathbb{I}}\otimes\hat{\sigma}_{y} &
\hat{\sigma}_{y}\otimes\hat{\sigma}_{y}\\
\hat{\sigma}_{y}\otimes\hat{\sigma}_{x} &
\hat{\sigma}_{x}\otimes\hat{\sigma}_{y} &
\hat{\sigma}_{z}\otimes\hat{\sigma}_{z}
\end{array}\right]
\end{equation*}
which have the property that
all operators along a row (or column) commute. Further,
the product of rows (or columns) yields
$\hat{R}_{i}=\hat{\mathbb{I}}$ and
$\hat{C}_{j}=\hat{\mathbb{I}}\,(j\neq3)$,
$\hat{C}_{3}=-\hat{\mathbb{I}}$, ($\forall\,i,j$) where
$\hat{R}_{i}:=\prod_{j}\hat{A}_{ij}$,
$\hat{C}_{j}:=\prod_{i}\hat{A}_{ij}$. Let us
assume that $M$ exists. From property (b) of the
map, to get $M(\hat{C}_{3})=-1$ (as required by
property (a)), we must have an odd number of $-1$
assignments in the third column. In the remaining
columns, the number of $-1$ assignments must be
even (for each column). Thus, in the entire
square, the number of $-1$ assignments must be
odd. Let us use the same reasoning, but along the
rows. Since each $M(\hat{R}_{i})=1$, we must have
even number of $-1$ assignments along each row.
Thus, in the entire square, the number of $-1$
assignments must be even. We have arrived at a
contradiction and therefore we conclude 
that $M$ does not exist
\end{proof}

\begin{rem} One could in principle assume $M$, to
be s.t. (a) $M(\hat{\mathbb{I}})=1$, (b)
$M(\alpha\hat{B}_{i})=\alpha M(\hat{B}_{i})$, for
$\alpha\in\mathbb{R}$, (c)
$M(\hat{B}_{i}^{2})=M(\hat{B}_{i})^{2}$, (d)
$M(\hat{B}_{i}+\hat{B}_{j})=M(\hat{B}_{i})+M(\hat{B}_{j})$,
to deduce (d)
$M(\hat{B}_{i}\hat{B}_{j})=M(\hat{B}_{i})M(\hat{B}_{j})$
and that $M(\hat{B}_{i})\in$ spectrum of
$\hat{B}_{i}$. Effectively then, condition (b)
listed in the theorem is satisfied as a
consequence.  Therefore, assuming (a)-(d) as
listed above, rules out a larger class of
$M$~\cite{KochenSpecker}.
\end{rem} 
Here $M$ maybe viewed as a specific class of
prediction maps that implicitly depends on the
state $[\psi]$. It is clear that according to the
theorem, non-contextual maps which are
\emph{functionally-consistent} must be incompatible with
QM. 
This leaves open an interesting possibility that non-contextual maps which have \emph{weak functional-consistency} could be consistent with QM.
{  Before proceeding to do so explicitly we observe that \emph{weak functional-consistency} is, in fact, a consequence of QM.}
\begin{prop*} 
Let a quantum mechanical 
system be in a state, s.t. measurement of
$\hat{C}$ yields repeatable results (same result
each time). Then according to QM, \emph{weak functional consistency} holds, where $\hat{C}:=
f(\hat{B}_{1},\hat{B}_{2},\dots\hat{B}_{n})$, and
$\hat{B}_{i}$ are as defined (in \defnref{seqnMltpl})
\end{prop*}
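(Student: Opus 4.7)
The plan is to exploit the spectral decomposition of mutually commuting observables together with the projection postulate of QM. The key observation is that, since $\hat{B}_{1},\dots,\hat{B}_{N}$ mutually commute, they share a common eigenbasis $\{|\alpha\rangle\}$ with $\hat{B}_{i}|\alpha\rangle = b_{i}^{(\alpha)}|\alpha\rangle$; consequently $\hat{C}=f(\hat{B}_{1},\dots,\hat{B}_{N})$ is diagonal in the same basis with eigenvalues $c^{(\alpha)}=\tilde{f}(b_{1}^{(\alpha)},\dots,b_{N}^{(\alpha)})$. The hypothesis of repeatable $\hat{C}$-measurements translates to $[\psi_{1}]$ being supported in a single eigenspace of $\hat{C}$ with some eigenvalue $c$, so $M(\hat{C},[\psi_{1}])=c$, and the expansion of $[\psi_{1}]$ in $\{|\alpha\rangle\}$ only involves $\alpha$'s with $\tilde{f}(b_{1}^{(\alpha)},\dots,b_{N}^{(\alpha)})=c$.

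Next I would fix an arbitrary permutation $k_{1},\dots,k_{N}$ and apply the projection postulate sequentially. Measuring $\hat{B}_{k_{1}}$ on $[\psi_{1}]$ returns some eigenvalue $b_{k_{1}}$ and collapses the state via the spectral projector onto the $b_{k_{1}}$-eigenspace. Since $[\hat{B}_{k_{1}},\hat{C}]=0$, this projector commutes with $\hat{C}$ and therefore preserves the $c$-eigenspace of $\hat{C}$; hence $[\psi_{k_{2}}]$, as defined through $S$, still lies in that eigenspace. Iterating, each subsequent measurement outcome $b_{k_{i}}=M(\hat{B}_{k_{i}},[\psi_{k_{i}}])$ is a genuine eigenvalue of $\hat{B}_{k_{i}}$ while the running state stays inside the $c$-eigenspace of $\hat{C}$. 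After all $N$ steps, the final state is a joint eigenvector of $\hat{B}_{1},\dots,\hat{B}_{N}$ with eigenvalues $\{b_{i}\}$, and applying $\hat{C}$ gives both $\tilde{f}(b_{1},\dots,b_{N})$ (by the spectral representation) and $c$ (by the invariance just established), so $\tilde{f}(b_{1},\dots,b_{N})=c$, which is exactly the equation defining \emph{weak functional-consistency}.

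The main subtlety I anticipate concerns degenerate eigenvalues of the $\hat{B}_{i}$: within a fixed $c$-eigenspace of $\hat{C}$ there may be many common eigenvectors of the $\hat{B}_{i}$ with different tuples $(b_{1}^{(\alpha)},\dots,b_{N}^{(\alpha)})$, all nonetheless satisfying $\tilde{f}(b_{1}^{(\alpha)},\dots,b_{N}^{(\alpha)})=c$. This only means the individual outcomes are not uniquely determined (nor is the identity permutation-independent at the level of individual $b_{i}$'s), but every admissible run produces outcomes consistent with the $\tilde{f}$-constraint, so the proposition holds for \emph{every} permutation and \emph{every} realisation of the measurement sequence. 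Apart from this bookkeeping about degeneracies, the argument reduces to the standard QM fact that commuting observables can be measured in sequence without disturbing each other's eigenvalues, so no further technical machinery is needed.
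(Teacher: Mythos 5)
Your proposal is correct and follows essentially the same route as the paper: both rest on the observation that a state giving repeatable $\hat{C}$-outcomes is supported only on joint eigenvectors $\left|\mathbf{b}\right\rangle$ of the $\hat{B}_{i}$ whose eigenvalue tuples satisfy $\tilde{f}(b_{1},\dots,b_{N})=c$. The only difference is presentational: the paper completes the commuting set to remove degeneracies and extracts the constraint by taking the inner product of $\hat{C}\left|\psi\right\rangle =c\left|\psi\right\rangle$ with $\left\langle \mathbf{b}\right|$, whereas you keep the degeneracies and make explicit the step (left implicit in the paper) that each collapse projector commutes with $\hat{C}$ and hence preserves the $c$-eigenspace throughout the measurement sequence.
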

\begin{proof} 
Without loss of generality we can take
$\hat{B}_{1},\hat{B}_{2},\dots\hat{B}_{n}$ to be
mutually compatible and a complete set of
observables (operators can be added to make the set
complete if needed).
It follows
that $\exists$
$\left|\mathbf{b}=\left(b_{1},b_{2},\dots
b_{n}\right)\right\rangle $ s.t.
$\hat{B}_{i}\left|\mathbf{b}\right\rangle
=b_{i}\left|\mathbf{b}\right\rangle $,
and that
$\sum_{\mathbf{b}}\left|\mathbf{b}\right\rangle
\left\langle \mathbf{b}\right|=\hat{\mathbb{I}}$.
Let the state of the system
$\left|\psi\right\rangle $ be s.t.
$\hat{C}\left|\psi\right\rangle
=c\left|\psi\right\rangle $. For
the statement to follow, one need only show that
$\left|\psi\right\rangle $ must be made of only
those $\left|\mathbf{b}\right\rangle $s, which
satisfy $c=\tilde
f(b_{1},b_{2},\dots
b_{n})$.  This is the crucial step and
proving this is straightforward. We start with
$\hat{C}\left|\psi\right\rangle
=c\left|\psi\right\rangle $ and take its inner
product with $\left\langle \mathbf{b}\right|$ to
get 
\begin{eqnarray*} 
c \langle \mathbf{b}|\psi\rangle 
& = &
\langle
\mathbf{b} |\hat{C} |\psi
\rangle  
\\
&=&
\langle
\mathbf{b}| f(\hat{B}_{1},\hat{B}_{2},\dots\hat{B}_{n}) |
\psi \rangle
\\ 
&=&
\tilde f (b_{1},b_{2},\dots b_{n})
\langle
\mathbf{b}|\psi\rangle  
\end{eqnarray*}
Also, we have $\left|\psi\right\rangle
=\sum_{\mathbf{b}}\left\langle
\mathbf{b}|\psi\right\rangle
\left|\mathbf{b}\right\rangle $, from
completeness. If we consider
$\left|\mathbf{b}\right\rangle $s for which
$\left\langle \mathbf{b}|\psi\right\rangle \neq0$,
then we find that indeed $c=\tilde
f(b_{1},b_{2},\dots
b_{n})$. The case when $\left\langle
\mathbf{b}|\psi\right\rangle =0$ is anyway
irrelevant as the corresponding $\vert b\rangle$
does not contribute to $\vert \psi\rangle$.
We can thus conclude that
$\left|\psi\right\rangle $ is made only of those
$\left|\mathbf{b}\right\rangle $s that satisfy the
required relation.
\end{proof} 
It is worth noting that in the Peres Mermin
case, where $\hat{R}_{i}$ and $\hat{C}_{j}$ are
$\pm\hat{\mathbb{I}}$, it follows that all
states are their eigenstates. Consequently, for
these operators \emph{weak functional-consistency}
must always hold.

\section{Construction}
We are now ready to describe our 
model explicitly.
Let the state of a finite dimensional quantum
system be $\left|\chi\right\rangle $. We wish to
assign a value to an arbitrary observable
$\hat{A}=\sum_{a}a\left|a\right\rangle
\left\langle a\right|$, which has eigenvectors $\{
\vert a_j\rangle \}$. The corresponding ordered
eigenvalues are $\{a_j\}$ such that $a_{\rm
min}=a_1$ and $a_{\rm max}=a_n$.  Our HV model for
QM assigns values in the following three steps:
\setdefaultleftmargin{0pt}{}{}{}{}{}
\begin{enumerate}
\item
{\bf  Initial HV:} Pick a number
$c\in[0,1]$, from a uniform random distribution.\
\item
{\bf
Assignment or Prediction:}
 The value assigned to
$\hat{A}$ is given by finding the smallest $a$
s.t.  $c\le\sum_{a'=a_{\text{min}}}^{a}\left|\left\langle
a'|\chi\right\rangle \right|^{2},$ viz. we have
specified a prediction map, $M(\hat{A},[\chi ])=a$.
\item
{\bf  Update:} After measuring an operator, the state is
updated (collapsed) in accordance with the rules
of QM. This completely specifies the sequence map
$S$.
\end{enumerate}
The above HV model works for all quantum systems,
however, to illustrate its working,
consider the example of a spin-half particle in the state
$\left|\chi\right\rangle
=\cos\theta\left|0\right\rangle
+\sin\theta\left|1\right\rangle $ and the observable
$\hat{A}=\hat{\sigma}_{z}=\left|0\right\rangle
\left\langle 0\right|-\left|1\right\rangle
\left\langle 1\right|$. Now, according to the
postulates of this theory, $M(\hat{A},[\chi])=+1$
if
the randomly generated value for 
$c\le\cos^{2}\theta$ else $\hat{A}$ is assigned
$-1$; it then follows,  from $c$ being uniformly
random in $[0,1]$ that the statistics agree with
the predictions of QM {\it i.e.}  the
Born rule. 
The assignment described by the
prediction map $M$ is non-contextual since, given
an operator and a state (alongwith the HV $c$), the value is
uniquely assigned. The map $M$ is, however,
\emph{non functionally-consistent}.

{ To see this \emph{non functional-consistency} explicitly in} our model and to see its
applicability to composite systems,
we apply the model to the Peres Mermin situation of two
spin-half particles. Consider the initial state of
the system 
$\left|\psi_{1}\right\rangle
=\left|00\right\rangle $. Assume we 
obtained $c=0.4$ as a random choice. To arrive
at the assignments, note that
$\left|00\right\rangle $ is an eigenket of only
$\hat{R}_{i},\hat{C}_{j}$ and
$\hat{A}_{33}=\hat{\sigma}_{z}\otimes\hat{\sigma}_{z}$.
Thus, in the first iteration, all these should be
assigned their respective eigenvalues. The
remaining operators must be assigned $-1$ as one
can readily verify by explicitly finding the
smallest $a$ as described in postulate 2 of the
model (see \tblref{HVmodel}).

For the next iteration, $i=2$, 
after the first measurement is over
say the random number generator yielded the value $c=0.1$.
Since $\left|\psi\right\rangle $ is also
unchanged
the assignment remains invariant (in fact any of
$c<0.5$ would yield the same result as
evident from the previous exercise). For the
final step we choose to measure
$\hat{A}_{23}(=\hat{\sigma}_{y}\otimes\hat{\sigma}_{y})$,
to proceed with sequentially measuring
$\hat{C}_{3}$. To simplify calculations, we note
$\left|00\right\rangle
=\frac{1}{\sqrt 2} \left[ \left(\left|\tilde{+}\tilde{-}\right\rangle
+\left|\tilde{-}\tilde{+}\right\rangle
\right)/\sqrt{2}+\left(\left|\tilde{+}\tilde{+}\right\rangle
+\left|\tilde{-}\tilde{-}\right\rangle
\right)/\sqrt{2}\right] , 
$
with
$\left|\tilde{\pm}\right\rangle
=(\left|0\right\rangle \pm i\left|1\right\rangle)/\sqrt{2} $
(eigenkets of $\hat{\sigma}_{y}$). Since
$\left|00\right\rangle$ is manifestly not an
eigenket of $\hat A_{23}$, we must find an appropriate
eigenket $\left|a^-_{23}\right\rangle $ s.t.
$\hat A_{23}\left|a^{-}_{23}\right\rangle
=-\left|a^{-}_{23}\right\rangle $, since $c=0.1$ and
$\left\langle a_{23}^{-}|00\right\rangle $ is already
$>0.1$. It is evident that
$\left|a^{-}_{23}\right\rangle
=\left(\left|\tilde{+}\tilde{-}\right\rangle
+\left|\tilde{-}\tilde{+}\right\rangle
\right)/\sqrt{2}=\left(\left|00\right\rangle
+\left|11\right\rangle \right)/\sqrt{2}$, which
becomes the final state.

For the final
iteration, $i=3$, say we obtain $c=0.7$. So far,
we have $M_{1}(\hat{A}_{33})=1$ and
$M_{2}(\hat{A}_{23})=-1$. We must obtain
$M_{3}(\hat{A}_{13})=1$, independent of the value
of $c$, to be consistent. Let us check that.
Indeed, according to postulate 2, since
$\hat{\sigma}_{x}\otimes\hat{\sigma}_{x}\left(\left|00\right\rangle
+\left|11\right\rangle
\right)/\sqrt{2}=1\left(\left|00\right\rangle
+\left|11\right\rangle \right)/\sqrt{2}$,
$M_{3}(\hat{A}_{13})=1$ for all allowed values of
$c$. It is to be noted that
$M_{2}(\hat{A}_{33})=M_{3}(\hat{A}_{33})$ and
$M_{2}(\hat{A}_{23})=M_{3}(\hat{A}_{23})$, which
essentially expresses the compatibility of these
observables; i.e. 
once measured, the values of observables compatible
with $\hat{A}_{13}$ are not affected by the 
measurement of $\hat{A}_{13}$.

\begin{table*}
 \begin{equation*}
\renewcommand{\arraystretch}{1.5}
\begin{array}{ccc} 
i=1:\,\, c=0.4,\,\left|\psi_{\text{init}}\right\rangle=\left|00\right\rangle
&
i=2:\,\, c=0.1,\,\left|\psi_{\text{init}}\right\rangle=\left|00\right\rangle
&
i=3:\,\, c=0.7,\,\left|\psi_{\text{init}}\right\rangle=\frac{1}{\sqrt{2}}
(\left|00\right\rangle
+\left|11\right\rangle) \\
\renewcommand{\arraystretch}{1}
M_{1}(\hat{A}_{ij})\doteq\left[\begin{array}{ccc} -1 & -1 & -1\\ -1 & -1 & -1\\ -1 & -1 & +1\end{array}\right] &
\renewcommand{\arraystretch}{1}
M_{2}(\hat{A}_{ij})\doteq\left[\begin{array}{ccc}-1 & -1 & -1\\ -1 & -1 & -1\\ -1 & -1 & +1\end{array}\right] &
\renewcommand{\arraystretch}{1}
M_{3}(\hat{A}_{ij})\doteq\left[\begin{array}{ccc}+1
& +1 & +1\\ +1 & +1 & -1\\ +1 & +1 &
+1\end{array}\right] \\
M_{1}(\hat{R}_{i}), M_{1}(\hat{C}_{j})=+1\,(j\neq3) &
M_{2}(\hat{R}_{i}), M_{2}(\hat{C}_{j})=+1\,(j\neq3) &
M_{3}(\hat{R}_{i}), M_{3}(\hat{C}_{j})=+1\,(j\neq3)\\
M_{1}(\hat{C}_{3})=-1 & M_{2}(\hat{C}_{3})=-1 & M_{3}(\hat{C}_{3})=-1\\ 
\hat{A}_{33}=\hat{\sigma}_{z}\otimes\hat{\sigma}_{z};M_{1}(\hat{A}_{33})=+1&
\quad\quad\hat{A}_{23}=\hat{\sigma}_{y}\otimes\hat{\sigma}_{y};M_{2}(\hat{A}_{23})=-1\quad\quad&
\hat{A}_{13}=\hat{\sigma}_{x}\otimes\hat{\sigma}_{x};M_{3}(\hat{A}_{13})=+1\\
\left|\psi_{\text{final}}\right\rangle =
\left|00\right\rangle  &
\left|\psi_{\text{final}}\right\rangle =
\frac{1}{\sqrt{2}}(\left|00\right\rangle
+\left|11\right\rangle) &
\left|\psi_{\text{final}}\right\rangle =
\frac{1}{\sqrt{2}}(\left|00\right\rangle+\left|11\right\rangle)


\label{eq:toyModel}
\end{array}
\end{equation*}
\caption{HV model applied to the Peres Mermin situation}
\label{tbl:HVmodel}
\end{table*}
The \emph{non functional-consistency} is
manifest, for $M_{1}(\hat{C}_{3})=1\neq
M_{1}(\hat{A}_{13})M_{1}(\hat{A}_{23})M_{1}(\hat{A}_{33})=-1$,
where the subscript refers to the iteration
number.  More precisely,
$M_{1}(\hat{O}):=M(\hat{O},\left[\left|\psi_{1}\right\rangle
=\left|00\right\rangle \right])$ where the
complete state $\left[\left|\psi_{1}\right\rangle
\right]$ implicitly refers to both the quantum
state $\left|00\right\rangle $ and the HV $c=0.4$.
The model, however, obeys the \emph{weak functional-consistency} requirement,
namely,
$M_{1}(\hat{C}_{3})=M_{1}
(\hat{A}_{33})M_{2}(\hat{A}_{23})M_{3}(\hat{A}_{13})$,
where
$M_{2}:=M(\hat{O},\left[\left|\psi_{2}\right\rangle
\right])$,
$M_{3}:=M(\hat{O},\left[\left|\psi_{3}\right\rangle
\right])$ and $\left|\psi_{2}\right\rangle
,\,\left|\psi_{3}\right\rangle $ are obtained from
postulate 3. Note that for each iteration, a new
HV is generated. 

\section{Implications and Remarks}
{  
The model demonstrates \emph{non function-consistency} as an alternative signature of quantumness as opposed to contextuality. This view is not just an artifact of the simplicity of the model. 
It }has implications to Bohm's HV
model (BHVM), where if the initial conditions are
precisely known the entire trajectory of a
particle (guided by the wavefunction) can be
predicted including the individual outcome of
measurements.  BHVM applied to a single spin-half
particle in a Stern-Gerlach experiment, can
predict opposite results for two measurements of
the same operator.  This observation which is
often used as a demonstration of ``contextuality''
in BM, does not involve  overlapping sets of
compatible measurements to provide two different
contexts~\cite{HardyCntxBM} {and is therefore not of
interest here.}  It turns out that,
if one constructs  a one-one map between an
experiment and an observable (by following a
certain convention), this so-called ``contextuality''
can be removed from BHVM.  However, the prediction
map so obtained from observables to measurement
outcomes turns out to be \emph{non functionally-consistent},
suggesting that \emph{non functional-consistency} is a more
suitable explanation of non-classicality
of BHVM.
{ In fact, our model when appropriately extended
to continuous variables yields Bohmian trajectories in the
single particle  case which suggests that it can be used as
a starting point for constructing more interesting families
of HV theories that take quantum time dynamics into account
as well.

Bell himself had constructed a deceptively similar toy
model\footnote{as referred to earlier} to demonstrate a HV
construction that is not ruled out by his contextuality
no-go theorem. However, his model was contextual (and
\emph{functionally-consistent}) which is in contrast with our
construction which is non-contextual (and has \emph{weak
functional-consistency}).

We end with two short remarks.} First, we illustrate
how \emph{non functional-consistency} gives rise to situations which
could get confused with the presence of contextuality.
Imagine for two spin 1/2 particles
\begin{eqnarray*}
&&\hat{B}_{1}\!=\!\hat{\sigma}_{z}\otimes\hat{\mathbb{I}}\!=\\
&&\!\left|00\right\rangle
\left\langle 00\right|+\left|01\right\rangle
\left\langle 01\right|-\left[\left|10\right\rangle
\left\langle 10\right|+\left|11\right\rangle
\left\langle 11\right|\right], \\
&&\hat{B}_{2}\!=\!\hat{\mathbb{I}}\otimes\hat{\sigma}_{z}\!=\\
&&\!\left|10\right\rangle
\left\langle 10\right|+\left|11\right\rangle
\left\langle 11\right|-\left[\left|00\right\rangle
\left\langle 00\right|+\left|01\right\rangle
\left\langle 01\right|\right], \\
&&\hat{C}=\!f(\{\hat{B}_{i}\}) \\
&&\quad=1.\left|00\right\rangle \left\langle
00\right|+2.\left|01\right\rangle \left\langle
01\right|+ \\
&&\! \quad
3.\left|10\right\rangle \left\langle
10\right|+4.\left|11\right\rangle \left\langle
11\right|. 
\end{eqnarray*}
$\hat{C}$ maybe viewed as a function of
$\hat{B}_{1}$, $\hat{B}_{2}$ and other operators
$\hat{B}_{i}$ which are constructed to obtain a
maximally commuting set. A measurement of
$\hat{C}$, will collapse the state into one of the
states which are simultaneous eigenkets of
$\hat{B}_{1}$ and $\hat{B}_{2}$. Consequently,
from the observed value of $\hat{C}$, one can
deduce the values of $\hat{B}_{1}$ and
$\hat{B}_{2}$. Now consider
$\sqrt{2}\left|\chi\right\rangle
=\left|10\right\rangle +\left|01\right\rangle $,
for which $M_{1}(\hat{B}_{1})=1$, and
$M_{1}(\hat{B}_{2})=1$, using our model, with
$c<0.5$. However, $M_{1}(\hat{C})=1$, from which
one can deduce that $\hat{B}_{1}$ was $+1$, while
$\hat{B}_{2}$ was $-1$.  This property itself, one
may be tempted call contextuality, viz.  the value
of $\hat{B}_{1}$ depends on whether it is measured
alone or with the remaining $\{\hat{B}_{i}\}$.
However, it must be noted that $\hat{B}_{1}$ has a
well defined value, and so does $\hat{C}$. Thus by
our accepted definition, there is no
contextuality. It is just that $M_{1}(\hat{C})\neq
\tilde
f(M_{1}(\hat{B}_{1}),M_{1}(\hat{B}_{2}),\dots)$,
viz. the theory is \emph{non functionally-consistent}. Note that
after measuring $\hat{C}$, however,
$M_{2}(\hat{B}_{1})=+1$ and
$M_{2}(\hat{B}_{2})=-1$ (for any value of $c$)
consistent with those deduced by measuring
$\hat{C}$.  Evidently, \emph{functional-consistency} must hold
for the common eigenkets of $\hat{B}_{i}$'s.
Consequently, any violation of \emph{functional-consistency}
must arise from states that are super-positions or
linear combinations of these eigenkets.

{ Second, note that} entanglement is not necessary to demonstrate
\emph{non functional-consistency}; for instance the
Peres Mermin test is a state independent test
where a separable state can be used to arrive at a
contradiction. On the other hand if everything is
\emph{functionally-consistent} in a situation, can we have
violation of Bell's inequality or non-locality?
The answer is no and, therefore, one can say that
Bell's inequalities bring out  non-local
consequences of \emph{non functionally-consistent} prediction
maps and the notion of \emph{non functional-consistency} is
more basic.

\section{Conclusion}
In this letter we have presented
\emph{non functional-consistency} as an alternative to
contextuality and as an essential signature of
quantumness at the kinematic level. Our result
points to a (quantum) dynamical exploration of contextuality which 
so far has effectively been studied kinematically only.
We expect our result to provide new insights that will be useful
in the areas of foundations of QM and QIP.

\acknowledgments
ASA and Arvind acknowledge
funding from KVPY and DST Grant No.
EMR/2014/000297 respectively.




\bibliographystyle{unsrt}
\bibliography{references}

\end{document}